\documentclass[11pt]{article}
\usepackage[utf8]{inputenc}
\usepackage[T1]{fontenc}
\usepackage{amsmath,amssymb,amsthm}
\usepackage{mathtools}
\usepackage{booktabs}
\usepackage{longtable}
\usepackage{enumitem}
\usepackage{tikz}
\usetikzlibrary{arrows.meta,positioning,shapes.geometric}
\usepackage[hidelinks]{hyperref}
\newcommand{\doi}[1]{\href{https://doi.org/#1}{doi:\detokenize{#1}}}
\newcommand{\ver}{\mathsf{ver}}

\hypersetup{
  pdftitle={CA-less Mutual Co-Signing over a Unidirectional Visual Channel with Transported Hardware Attestation},
  pdfauthor={Dmytro Diikun},
  pdfsubject={Cryptographic protocols; digital signatures; hardware attestation; offline co-signing},
  pdfkeywords={digital signatures, hardware attestation, certificate-less trust, offline protocols, fountain codes, secure element, signature chaining},
  pdfcreator={}, pdfproducer={}
}

\theoremstyle{definition}
\newtheorem{definition}{Definition}
\newtheorem{remark}{Remark}
\theoremstyle{plain}
\newtheorem{theorem}{Theorem}
\newtheorem{proposition}{Proposition}
\newtheorem{lemma}{Lemma}

\newcommand{\RS}{\langle\mathsf{RS}\rangle}
\newcommand{\US}{\langle\mathsf{US}\rangle}
\newcommand{\cat}{\,\|\,}
\newcommand{\enc}{\mathsf{enc}}
\newcommand{\Sign}{\mathsf{Sign}}
\newcommand{\Vrfy}{\mathsf{Vrfy}}
\newcommand{\KGen}{\mathsf{KGen}}
\newcommand{\Adv}{\mathrm{Adv}}
\newcommand{\Att}{\mathsf{Att}}
\newcommand{\AttVer}{\mathsf{AttVer}}
\newcommand{\Gen}{\mathsf{Gen}}

\title{CA-less Mutual Co-Signing of Documents over a\\
Unidirectional Visual Channel with Transported Hardware Attestation}
\author{Dmytro Diikun\thanks{Aspects of the construction described here are the subject of a pending patent application by the author.}\\[2pt]
\small Independent Researcher, Ukraine\\
\small \texttt{diikunapp@gmail.com}\\
\small ORCID: \texttt{0009-0007-5303-3358}\\
\small Also available at Zenodo: \texttt{doi:10.5281/zenodo.22055260}}
\date{August 2026 \quad (preprint)}

\begin{document}
\maketitle
\begin{center}\small \textcopyright~2026 Dmytro Diikun. Licensed under CC BY 4.0.\end{center}

\begin{abstract}
We describe and analyze a protocol for mutual co-signing of a document by two
mobile devices that (i)~communicate only over a one-way, lossy, low-bandwidth
optical channel (an animated on-screen code read by the counterparty's camera),
(ii)~use no intermediary server on the trust path, and (iii)~use no certificate
authority. Trust in each party's public key is instead grounded in a hardware
attestation token produced by the platform secure element, transported in full
over the visual channel by a rateless (fountain) code and cryptographically bound
into the co-signature. The core technical contribution is a \emph{two-stage hash
anchor} that removes the circular signing dependency inherent to interactive
co-signing: the first party commits to the document before the identity of the
second party is known, and the second party's identity is later bound to that
commitment without invalidating the first signature. We give a threat model,
define four security properties---anchor binding, co-signature inseparability,
attestation-bound key provenance, and post-signing tamper evidence---and reduce
them to standard assumptions (collision resistance of $H$ and EUF-CMA security of
the underlying signature scheme), with the secure element modeled as an ideal
signing oracle. We report a working instantiation on iOS/Android using
ECDSA~P-256 in the Secure Enclave/StrongBox, SHA-256, Apple App Attest / Play
Integrity, and an LT-style fountain code, together with an independent
third-party verifier that recomputes all anchors and checks both signatures fully
offline.
\end{abstract}

\noindent\textbf{Keywords.} digital signatures $\cdot$ hardware attestation
$\cdot$ certificate-less trust $\cdot$ offline protocols $\cdot$ fountain codes
$\cdot$ secure element $\cdot$ signature chaining.

\section{Introduction}\label{sec:intro}
\paragraph{Problem.}
Two people meet in person, each holding a phone, and want to co-sign a document
so that a third party can later verify both signatures and the fact that each
signing key lives in genuine device hardware. The setting we target has three
hard constraints simultaneously: no network at signing time; no server that the
signed data passes through; and no certificate authority (CA) to vouch for either
public key. The only channel is optical: one phone displays a machine-readable
code, the other reads it with its camera. The channel is one-way per turn, lossy
(dropped frames), and capacity-limited.

\paragraph{Why this is not trivial.} Two difficulties compound.
\begin{enumerate}[leftmargin=1.6em]
\item \emph{Circular dependency.} In naive interactive co-signing, each party
wants to sign ``the final document,'' but the final document includes the other
party's identity and signature, which do not yet exist. Party~$A$ cannot sign
until $B$ is known; $B$ cannot be bound until $A$ has signed. We break this with a
two-stage hash anchor (\S\ref{sec:anchor}).
\item \emph{Key trust without a CA.} With no CA and no network, a receiver cannot
fetch or validate a certificate chain for the sender's key online. We replace
certificate-based trust with \emph{transported hardware attestation}: the full
platform attestation token, which cryptographically ties the public key to key
generation inside a genuine secure element, is carried over the optical channel
itself and bound into the co-signature (\S\ref{sec:attest}). Because a full
attestation object exceeds the capacity of a single static optical code, this is
made practical only by rateless framing (\S\ref{sec:framing}).
\end{enumerate}

\paragraph{Contributions.}
\begin{itemize}[leftmargin=1.6em]
\item A two-stage hash-anchor construction that provably removes the circular
signing dependency while keeping the first signature stable
(\S\ref{sec:anchor},\,\S\ref{sec:anchorbind}).
\item A signature-chaining rule that makes the second signature inseparable from
a \emph{specific} first signature, preventing ``re-pasting'' a signature onto a
different document or a different first signature
(\S\ref{sec:chaining},\,\S\ref{sec:insep}).
\item A CA-less key-provenance mechanism based on transporting the full
hardware-attestation token point-to-point over the visual channel and binding its
hash into the signed anchor, with an offline verification path when the
attestation scheme offers a certificate chain to a manufacturer root and a
detachable third-party path otherwise (\S\ref{sec:attest},\,\S\ref{sec:prov}).
\item An engineering instantiation and a standalone verifier that reproduces all
anchors and both signatures with no network access (\S\ref{sec:impl}).
\end{itemize}

\paragraph{Scope.} This paper is about a cryptographic protocol. It makes no
claim about the legal status of its output in any jurisdiction, and we do not
claim the scheme is a substitute for a qualified or PKI signature; the absence of
a CA is a design goal with explicit trade-offs, discussed in
\S\ref{sec:limits}. Binding a key to a natural person is an application-layer
step outside the offline core and is deliberately excluded from the protocol and
its analysis.

\section{Related Work}\label{sec:related}
Our construction sits at the intersection of five lines of work. For each we
state what it provides and how our setting differs; the contribution is the
\emph{combination} (\S\ref{sec:related-combo}), not any single ingredient.

\paragraph{Multi-signatures and co-signing.}
Multi-signature schemes let a group of signers produce a single \emph{compact}
signature on a \emph{common} message; Bellare and Neven~\cite{BN06} give a scheme
secure in the plain public-key model, and Schnorr-based constructions such as
MuSig~\cite{MPSW19} refine round complexity. These target signature
\emph{aggregation} under a PKI or the plain-public-key model and are interactive
protocols run over a network. Our setting differs on three axes. First, we do not
aggregate: the two parties produce two \emph{distinct} signatures over
\emph{role-separated} messages ($m_A$, $m_B$), and it is their \emph{chaining}
that carries security (\S\ref{sec:chaining}). Second, signing is \emph{ordered
and asymmetric}: the initiator commits before the responder's identity exists,
which our two-stage anchor (\S\ref{sec:anchor}) makes possible. Third, key trust
is established without certificates and without a network at signing time.

\paragraph{Certificate-less and identity-based public-key cryptography.}
Al-Riyami and Paterson~\cite{CLPKC03} introduced certificate-less public-key
cryptography (CL-PKC) to remove certificates while avoiding identity-based key
escrow. CL-PKC still relies on a key-generation centre (KGC) holding a master
secret. Our scheme also dispenses with certificates, but its trust root is
different in kind: the \emph{hardware manufacturer's attestation root}, reached
and checked without any network at signing time (\S\ref{sec:attest}). We do not
replace a CA with a KGC; we replace certificate-based key trust with transported
hardware attestation.

\paragraph{Remote and hardware attestation.}
Platform attestation services---Apple App Attest / DeviceCheck~\cite{AppAttest},
Google Play Integrity~\cite{PlayIntegrity}, Android Key
Attestation~\cite{KeyAttest}, and, in the anonymous setting, TPM-based Direct
Anonymous Attestation~\cite{DAA04}---let a relying party check that a key was
generated inside genuine device hardware. In standard deployment the token is
verified \emph{server-side}. Our use is different: the \emph{full} token is
transported peer-to-peer over the visual channel and \emph{bound into the mutual
signature} (\S\ref{sec:attest}), so a third party can assess key provenance
either fully offline (when the token carries a certificate chain to a pinned
manufacturer root, as with App Attest) or as a detachable step against the
manufacturer's service (as with Play Integrity)---in both cases without a CA.

\paragraph{Rateless codes and animated visual channels.}
Fountain codes---LT~\cite{LubyLT02}, Raptor~\cite{Raptor06}, and
RaptorQ~\cite{RaptorQ}---recover a $k$-symbol payload from any sufficiently large
set of encoded symbols over an erasure channel~\cite{DigitalFountain98}, with no
feedback channel. Combining them with animated on-screen barcodes to move data
screen-to-camera is established in practice (e.g.\ TXQR~\cite{TXQR},
Cimbar~\cite{Cimbar}). In all of these, the animated fountain stream is a
\emph{one-way data pipe}; none binds the transport to a mutual signing act or to
hardware attestation. We use rateless framing for one structural reason: a full
attestation token exceeds the capacity of a single static optical code, so
transporting it point-to-point---and thereby achieving CA-less provenance at
all---is only practical with a rateless code (\S\ref{sec:framing}). We make no
optimality claim for our specific degree rule and defer optimal rateless design
to the LT/Raptor line (\S\ref{sec:framing},\,\S\ref{sec:limits}).

\paragraph{Content provenance and capture-time signing.}
Content-authenticity frameworks, notably C2PA / Content Credentials~\cite{C2PA},
attach a signed manifest asserting an asset's origin and edit history. Two
differences are salient. First, C2PA is \emph{one-sided} capture/edit provenance,
not a mutual co-signature between two independent parties. Second, C2PA hard
bindings hash the \emph{asset bytes}, so any re-encoding invalidates the binding;
we deliberately anchor to the \emph{canonical textual} content and \emph{exclude}
display-file bytes (\S\ref{sec:composite},\,\S\ref{sec:anchor-detach}), so the
display artifact can be regenerated without breaking either signature.

\paragraph{The combination.}\label{sec:related-combo}
To our knowledge, no prior work co-signs a document \emph{mutually},
\emph{offline}, with \emph{certificate-less} key trust established by a
\emph{transported hardware-attestation token} carried over a \emph{rateless
visual channel}, and with a third party able to reproduce both signatures and
both provenance checks without network access. Each ingredient is individually
known; the novelty we claim is their composition and the two mechanisms that make
it work---the two-stage anchor that removes the circular signing dependency, and
the signature chaining that makes the second signature inseparable from a
specific first signature.

\section{Preliminaries and Notation}\label{sec:prelim}
Let $H:\{0,1\}^{\ast}\to\{0,1\}^{256}$ be a collision-resistant hash function
(instantiated as SHA-256). We write $x\cat y$ for concatenation and use two
non-printing control bytes as field delimiters: $\RS$ (ASCII~30, ``record
separator'', \texttt{0x1E}) between top-level fields, and $\US$ (ASCII~31, ``unit
separator'', \texttt{0x1F}) inside a compound field. All strings are UTF-8
encoded before hashing.

A \emph{secure-element signature scheme} $\mathsf{Sig}=(\KGen,\Sign,\Vrfy)$ is a
standard signature scheme (instantiated as ECDSA over NIST~P-256 with SHA-256,
DER-encoded, Base64-serialized~\cite{FIPS186,SEC1}) with the additional property
that $sk$ is generated inside, and never leaves, a hardware secure element; the
signing operation is gated by a user-presence factor (biometric). We model the
secure element as an oracle $\mathcal{O}_{\mathrm{SE}}$ that holds $sk$ and, on
input $m$, returns $\Sign(sk,m)$ only after a presence check; $sk$ is never
exposed. This matches an EUF-CMA signing oracle.

An \emph{attestation scheme} $\Att=(\Gen,\AttVer)$ produces, for a secure-element
public key $pk$ and a challenge $c$, a token $\tau$ such that $\AttVer(pk,c,\tau)$
accepts only if $pk$ was generated inside a genuine secure element of a device of
the attested platform, and $c$ was bound at generation time. We denote by
$\Adv^{\mathrm{snd}}_{\Att}$ the advantage of a PPT adversary against this
soundness property.

\begin{remark}[Production vs.\ development attestation]\label{rem:prod}
The soundness assumption is meaningful only for \emph{production} attestation. A
development-environment token (distinguishable by its attestation metadata, e.g.\
the App Attest environment / \texttt{aaguid}) must not be treated as the full
hardware guarantee. On a device with no secure element the platform path is
unavailable; the implementation then emits a reduced-assurance marker
($\mathit{isReal}=\mathrm{false}$) that is visible to any verifier and carries no
hardware-origin claim (\S\ref{sec:impl}).
\end{remark}

\section{System Model and Threat Model}\label{sec:model}
\subsection{Parties and channel}
Two mobile devices, $D_A$ (initiator) and $D_B$ (responder), each with a secure
element, a camera, and a display. The primary channel is optical: $D_A\to D_B$
(A's animated code read by B), then $D_B\to D_A$ (B's animated code read by A). No
server carries signed data; no CA is contacted.

An optional \emph{online variant} exists in which a relay forwards frames between
remote devices. The relay is explicitly \emph{not on the trust path}: every datum
it forwards is already committed and signed before it reaches the relay, and the
third-party verification recomputes all anchors from the committed fields, so a
malicious relay can at most cause a denial of service, never a change of accepted
content. The same non-network channels (NFC, Bluetooth/BLE, ultrasound, direct
device-to-device links, or machine-readable data embedded in a carrier file) may
substitute for the optical channel; the unifying property is that the
self-contained structure travels without signed data passing through any server.

\subsection{Adversary}\label{sec:adversary}
We consider four adversaries, analyzed separately.
\begin{description}[leftmargin=1.4em,style=nextline]
\item[$\mathcal{A}_{\mathrm{net}}$ --- hostile relay/channel.] Controls any relay
or optical path; may drop, reorder, replay, or inject frames. Goal: alter the
accepted content, or make a verifier accept a document neither party signed.
\item[$\mathcal{A}_{\mathrm{repaste}}$ --- signature re-paster.] A malicious
counterparty or channel observer who has seen valid $(pk_A,\sigma_A)$ and/or B's
signature and tries to transplant a signature onto a different document or a
different first signature.
\item[$\mathcal{A}_{\mathrm{key}}$ --- key-provenance forger.] Tries to make a
verifier accept a key as hardware-backed when it is not (e.g.\ a software key on
an emulator), or to replay another device's token for a fresh session.
\item[$\mathcal{A}_{\mathrm{post}}$ --- post-signing tamperer.] Given the fully
signed document, tries to change any field (amount, party, time, geolocation,
evidence) without detection.
\end{description}

\subsection{Trust assumptions}\label{sec:trust}
$H$ is collision-resistant; $\mathsf{Sig}$ is EUF-CMA-secure; the secure element
does not leak $sk$. The attestation manufacturer root(s) are authentic and are
embedded (pinned) in the verifier out of band. \emph{Out of scope, stated
explicitly:} coercion or duress of a present user; malware with full control of a
rooted/jailbroken device that can itself drive the biometric prompt; and the
legal weight of the output. The implementation carries a \texttt{duressFlag}; it
is a UX signal only, is \emph{not} integrity-protected, and is outside the
cryptographic threat model.

\section{Construction}\label{sec:construction}
We present the scheme bottom-up: canonical encoding (\S\ref{sec:encoding}), the
two-stage anchor (\S\ref{sec:anchor}), signature chaining
(\S\ref{sec:chaining}), attestation transport and binding (\S\ref{sec:attest}),
rateless framing (\S\ref{sec:framing}), and the composite verification hash
(\S\ref{sec:composite}). The analysis assumes only that the canonicalization
$\enc$ is injective and domain-separated by a scheme tag $\ver$
(\S\ref{sec:encoding}); the two serializations we use both satisfy this
(Lemma~\ref{lem:inj}). The worked example in Appendix~\ref{app:worked} is
synthetic and reproducible.

\subsection{Canonical field encoding and separator-injection defense}\label{sec:encoding}
A fixed, ordered field vector $\vec C$ is serialized to a canonical UTF-8 string
$\enc(\vec C\,)$ that is \emph{injective} (distinct vectors yield distinct byte
strings) and \emph{domain-separated} by a scheme tag $\ver$. We use two standard
injective serializations behind the tag. In the \emph{control-byte-escaped} form,
every value passes through an escaping map $f(\cdot)$ that rewrites any embedded
$\RS$ (the top-level separator) to $\US$, and compound sub-fields use $\US$; this
is the form of the composite record (\S\ref{sec:composite}). In the
\emph{length-prefixed} (TLV) form, every value is emitted as
$\langle\text{byte-length}\rangle\,\text{:}\,\langle\text{value}\rangle\,\text{\textbackslash n}$,
so a separator inside a value cannot forge a field boundary regardless of its
content. Both prevent the concatenation collisions in which two distinct field
vectors serialize to the same byte string (Lemma~\ref{lem:inj}); the
length-prefixed form is injective unconditionally, the escaped form on the
admissible domain (Remark~\ref{rem:domain}). We write $\enc(\vec C\,)$ for either
serialization of an ordered content vector $\vec C$.

\subsection{Two-stage hash anchor}\label{sec:anchor}
Let $C=\enc(\vec C\,)$ denote the canonical content string containing the document
terms and \emph{only} party $A$'s identity fields (no $B$ fields, no signatures,
no public keys). Define the first anchor
\[
  h_1 \;=\; H\!\left(\ver\cat \RS \cat C\right).
\]
Crucially, $h_1$ is computable by $D_A$ \emph{before} B's identity is known. When
$B$ joins, define the second anchor as a deterministic function of $h_1$ and $B$'s
identity fields $\mathit{id}_B$ (nickname, avatar index, and an optional
verified-contact hash):
\[
  h_2 \;=\; H\!\big(\ver \cat \RS \cat h_1 \cat \enc(\mathit{id}_B)\big),
\]
computed with the same injective canonicalization as $h_1$.
The scheme tag $\ver$---a short literal fixing the canonicalization, such as the deployed \texttt{v81}/\texttt{v82} identifiers---domain-separates schemes and rules out cross-scheme collisions.
Because $h_2$ is a deterministic function of $h_1$, the first signature
(below) remains valid after $B$ joins---no re-signing round is needed, so the
exchange collapses from three rounds to two optical turns (Figure~\ref{fig:seq}).

\begin{figure}[t]\centering
\begin{tikzpicture}[node distance=6mm and 16mm,
  box/.style={draw,rounded corners,align=center,font=\small,inner sep=4pt},
  ar/.style={-{Stealth[]},thick}]
  \node[box] (a1) {$D_A$: build $\vec C$,\\ compute $h_1$, sign $\sigma_A$};
  \node[box,right=of a1] (b1) {$D_B$: read frames,\\ verify $\sigma_A$, compute $h_2$,\\ sign $\sigma_B$};
  \node[box,below=of a1] (a2) {$D_A$: read final frame,\\ finalize};
  \draw[ar] (a1) -- node[above,font=\scriptsize]{$A\!\to\!B$ (doc, $\tau_A$, $\sigma_A$)} (b1);
  \draw[ar] (b1) |- node[right,font=\scriptsize,pos=0.25]{$B\!\to\!A$ ($\sigma_B$)} (a2);
\end{tikzpicture}
\caption{Two optical turns. $h_1$ is committed before $B$ exists; $h_2$ binds $B$
without invalidating $\sigma_A$.}\label{fig:seq}
\end{figure}

\subsection{Signatures and chaining}\label{sec:chaining}
With document identifier $\mathit{id}$, the two signed messages are
\[
  m_A = \text{``A:''}\cat \mathit{id}\cat\text{``:''}\cat h_1,\qquad
  m_B = \text{``B:''}\cat \mathit{id}\cat\text{``:''}\cat h_2\cat\text{``:''}\cat
        pk_A\cat\text{``:''}\cat \sigma_A,
\]
and $\sigma_A=\Sign(sk_A,m_A)$, $\sigma_B=\Sign(sk_B,m_B)$, each produced inside
the respective secure element under biometric gating. The domain-separating
prefixes ``A:''/``B:'' prevent cross-role message collisions. Because $m_B$
contains \emph{both} $pk_A$ and $\sigma_A$, the second signature is bound to one
specific, already-existing first signature (Figure~\ref{fig:chain}).

\begin{figure}[t]\centering
\begin{tikzpicture}[
  box/.style={draw,rounded corners,align=center,font=\small,inner sep=4pt},
  ar/.style={-{Stealth[]},thick}]
  \node[box] (ha) {$h_1=H(\ver\,\RS\,C)$};
  \node[box,right=18mm of ha] (ma) {$m_A=\text{A:}id\text{:}h_1$};
  \node[box,right=16mm of ma] (sa) {$\sigma_A$};
  \node[box,below=10mm of ma] (mb) {$m_B=\text{B:}id\text{:}h_2\text{:}pk_A\text{:}\sigma_A$};
  \node[box,right=16mm of mb] (sb) {$\sigma_B$};
  \draw[-{Stealth[]},thick] (ha)--(ma);
  \draw[-{Stealth[]},thick] (ma)--(sa);
  \draw[-{Stealth[]},thick] (sa) to[out=-90,in=90] (mb.north);
  \draw[-{Stealth[]},thick] (mb)--(sb);
\end{tikzpicture}
\caption{Signature chaining: $\sigma_A$ and $pk_A$ are inputs to $m_B$, so
$\sigma_B$ is inseparable from this specific $\sigma_A$.}\label{fig:chain}
\end{figure}

\subsection{Transported hardware attestation and key provenance}\label{sec:attest}
$D_A$ obtains a full attestation token $\tau_A$ binding $pk_A$ to secure-element
key generation, with a fresh single-use challenge $c$ committing to the key: the
platform attests a client-data hash $\mathit{cdh}=H(c\cat H(pk_A))$. The
\emph{full} $\tau_A$ (not merely a boolean ``is-real'') is placed in the transfer
structure; only its context-separated digest
$H(\text{``}\mathsf{ctx\_device\_A}\text{:''}\cat\tau_A)$ enters the signed anchor
set (\S\ref{sec:composite}). Verification has two modes:
(i)~\emph{offline chain mode}---the token carries a certificate chain to a
manufacturer root pinned in the verifier (e.g.\ Apple App Attest on iOS), checked
locally with no network; and (ii)~\emph{detachable service mode}---the token is
verifiable only against the manufacturer's service (e.g.\ Google Play Integrity on
Android), so the offline phase transports and binds $\tau$ and the hardware-origin
check completes as a detachable third-party step. Soundness and no-replay are
proved in \S\ref{sec:prov}.

\subsection{Rateless framing of the transfer structure}\label{sec:framing}
The self-contained transfer structure (content fields, $h_1$, $\sigma_A$, $pk_A$,
full $\tau_A$, scheme id, session id) is compressed and split into $k$ base
fragments ($\le\!675$ bytes each). Base frames carry one fragment; fountain frames
carry the XOR of a pseudo-random subset chosen by a linear congruential generator
(multiplier $1664525$, increment $1013904223$, modulus $2^{32}$, seeded by the
frame sequence number), with redundancy factor $3$ and degree $(\mathrm{seq}\bmod
k)+1$. The decoder peels fragments and confirms completion via a short checksum
(a prefix of $H(\text{payload})$). Because the code is rateless, no back-channel is
needed to request lost frames (Figure~\ref{fig:fountain}). We emphasize that the
specific LCG-XOR degree rule is an \emph{engineering choice}, not an optimized
construction: we claim only that it decodes reliably at the payload sizes in use.
Optimal rateless design (Robust-Soliton LT, Raptor/RaptorQ) is orthogonal and
applies unchanged as a drop-in replacement.

\begin{figure}[t]\centering
\begin{tikzpicture}[
  frag/.style={draw,minimum width=7mm,minimum height=5mm,font=\scriptsize},
  ar/.style={-{Stealth[]}}]
  \foreach \i in {1,2,3,4} { \node[frag] (f\i) at (\i*0.9,1) {$b_\i$}; }
  \node[frag,fill=black!8] (x1) at (1.35,0) {$b_1\!\oplus\!b_3$};
  \node[frag,fill=black!8] (x2) at (3.15,0) {$b_2\!\oplus\!b_4$};
  \draw[ar] (x1) -- (f1); \draw[ar] (x1) -- (f3);
  \draw[ar] (x2) -- (f2); \draw[ar] (x2) -- (f4);
  \node[font=\scriptsize,align=left,right=6mm of x2] {peeling: degree-1 frames reveal\\ fragments, then XOR out of higher-degree frames};
\end{tikzpicture}
\caption{Fountain framing and peeling decode over the lossy optical
channel.}\label{fig:fountain}
\end{figure}

\subsection{Composite verification hash}\label{sec:composite}
For archival and third-party checking, a composite hash $\mathit{finalHash}$ is
computed over an ordered 39-field record in the control-byte-escaped form of
\S\ref{sec:encoding} (both parties' public keys,
signatures, biometric-backing flags, device-token hashes, timestamps,
geolocation, evidence hashes, identity hashes, and $h_1$), with the $B$-block
included only when $\sigma_B\neq\varepsilon$. Every value is passed through the
escaping map $f$; the separator $\RS$ is written after every field. The exact
field order is reproduced verbatim by the reference verifier and listed in
Appendix~\ref{app:fields}.

\paragraph{Time-source field.}\label{sec:time}
One canonical field records the time source as a short string of the form
$\langle\text{ISO-8601}\rangle\,|\,\langle\text{source-id}\rangle$, where the
source identifier ranks, in decreasing order of trust, a relay-authenticated
timestamp, an NTP-synchronized time, and the local device clock (marked as such).
The security-relevant fact is only that this string is included verbatim in the
canonical text and therefore covered by $h_1$: by Theorem~\ref{thm:tamper}, any
retroactive substitution of the source (e.g.\ silently downgrading to the local
clock, or claiming a higher-trust source) changes the recomputed anchor and is
detected. We make no non-repudiation claim for the time source itself; a
higher-trust timestamp is a strengthening input, not part of the offline
security core.

\paragraph{Anchor detached from display bytes.}\label{sec:anchor-detach}
The cryptographic anchor is the canonical \emph{text} ($h_1$, $\mathit{finalHash}$),
\emph{not} the bytes of any display file (e.g.\ PDF). Non-deterministic rendering
bytes are deliberately excluded from the canonical representation, the signed
messages, and the transfer structure, so the display artifact can be regenerated
at will without breaking any signature.

\paragraph{Generality and variants.}\label{sec:variants}
The construction is stated for two parties over an optical channel, but the same
mechanisms apply to a family of instantiations. \emph{Number of parties:} the
two-party case is the base of a sequential $N$-party co-signing ($N\ge 2$), where
the $i$-th signature is taken over an anchor depending on the $i$-th party's
identity fields and chained to the public keys and signatures of all previous
parties; the one-sided ($N{=}1$) case is a special case. \emph{Channel:} any
non-network carrier of the self-contained structure works---animated optical code,
NFC, Bluetooth/BLE, ultrasound, a direct device-to-device link, or machine-readable
data embedded in a carrier file---the unifying property being that signed data
never passes through a server. \emph{Primitives:} $H$ may be any collision-resistant
hash (SHA-256, SHA-512, SHA-3, BLAKE2/BLAKE3); $\mathsf{Sig}$ any EUF-CMA scheme
with a secure-element key (ECDSA P-256/secp256k1, Ed25519, RSA); the secure element
any of Secure Enclave, StrongBox, TEE, TPM, or smart card; and the attestation any
scheme exposing key-in-hardware provenance (App Attest/DeviceCheck, Play Integrity,
Android Key Attestation, FIDO, TPM quote). \emph{Framing:} the rateless code may be
LT, Raptor/RaptorQ, an online code, or a fixed multi-frame scheme recoverable from
a subset. A trusted timestamp (e.g.\ RFC~3161) is an optional, detachable
strengthening, not part of the offline core. All statements below depend only on
the abstract properties (collision resistance of $H$, EUF-CMA of $\mathsf{Sig}$,
soundness of $\Att$, injectivity of $\enc$), so they carry over to every
instantiation in this family unchanged.

\section{Security Analysis}\label{sec:security}
We define and prove four properties. Throughout, the secure element is modeled as
an ideal signing oracle (\S\ref{sec:prelim}), and ``accept'' means the
third-party verifier's check passes:
\[
  \Vrfy^{\ast}(R)=1 \iff \Vrfy(pk_A,m_A,\sigma_A)=1 \;\wedge\;
  \Vrfy(pk_B,m_B,\sigma_B)=1,
\]
where $R=(\mathit{id},C,\mathit{id}_B,pk_A,\sigma_A,pk_B,\sigma_B)$ and $m_A,m_B$
are derived as in \S\ref{sec:chaining}. Attestation checks are orthogonal and
treated in \S\ref{sec:prov}.

\subsection{Encoding injectivity and anchor binding}\label{sec:anchorbind}

\begin{lemma}[Encoding injectivity]\label{lem:inj}
Fix a field schema of arity $n$ (an ordered list of $n$ slots, each either a
simple slot or a compound slot of two sub-fields). Let $\enc$ apply the escaping
map $f$ to every field value (replacing every $\RS$ byte inside a value by $\US$),
join the $n$ resulting tokens with $\RS$, and join the two sub-fields of a
compound slot with $\US$; the byte $\RS$ appears in the output only as a top-level
separator. Then, restricted to inputs whose sub-field values contain no $\US$ byte
in a position that would forge a sub-boundary, $\enc$ is injective: for any two
field vectors $\vec v\neq\vec w$ of the same schema, $\enc(\vec v)\neq\enc(\vec
w)$.
\end{lemma}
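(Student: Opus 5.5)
The plan is to prove injectivity by exhibiting a left inverse. I will define a parsing procedure $\mathsf{dec}$ on byte strings and show that $\mathsf{dec}(\enc(\vec v))=\vec v$ for every admissible $\vec v$. Injectivity then follows at once: if $\enc(\vec v)=\enc(\vec w)$, then $\vec v=\mathsf{dec}(\enc(\vec v))=\mathsf{dec}(\enc(\vec w))=\vec w$. Because the schema is fixed and $\vec v,\vec w$ share it, arity and slot types are known to $\mathsf{dec}$ in advance. This removes any ambiguity about how many tokens to expect.

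\emph{Step 1 (top-level split).} Since $f$ removes every $\RS$ from each value, and the $\US$ used inside compound slots is not $\RS$, the byte $\RS$ occurs in $\enc(\vec v)$ exactly at the $n-1$ join positions, or at $n$ positions if a trailing separator is written as in \S\ref{sec:composite}. I will fix the convention from the statement and handle the trailing variant by the same argument. Splitting on every occurrence of $\RS$ therefore returns exactly the $n$ tokens $t_1,\dots,t_n$ in order. Consequently $\enc(\vec v)=\enc(\vec w)$ forces $t_i(\vec v)=t_i(\vec w)$ slotwise; this is a routine induction on the position of the $j$-th $\RS$.

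\emph{Step 2 (compound split).} For a compound slot, the token is $f(a)\cat\US\cat f(b)$. Here the hypothesis that sub-field values contain no $\US$ in a sub-boundary-forging position is used. I will read it concretely as: $f(a)$ contains no $\US$, so the \emph{first} $\US$ in the token is the joining one. Under this reading, splitting at the first $\US$ recovers $f(a)$ and $f(b)$ uniquely, and equal tokens give equal pairs $(f(a),f(b))$.

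\emph{Step 3 (inverting $f$), the main obstacle.} The map $f$ sends $\RS\mapsto\US$, so on unrestricted values it is \emph{not} injective: the values $x\RS y$ and $x\US y$ collide. The lemma therefore holds only on a domain where $f$ is injective, and I will make this explicit by tying the restriction to the admissible domain of Remark~\ref{rem:domain}. The cleanest choice is to require admissible values to contain no $\RS$ at all, so that $f$ is the identity there. Alternatively, one can require that no value contains both $\RS$ and $\US$ in positions that $f$ would conflate; this is also where the ``no forging $\US$'' clause is needed, since an escaped $\RS$ becomes a $\US$ that could otherwise masquerade as the sub-boundary in Step 2.

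On this domain $f$ is injective, so equal images give equal values slotwise, hence $\vec v=\vec w$. I expect the real work to be stating the admissible domain precisely enough that Steps 2 and 3 both go through. I would first try the strongest simple condition (values are free of both control bytes, with $f$ kept as a defensive normalisation). I would then note that the length-prefixed form from \S\ref{sec:encoding} avoids this issue entirely, via the analogous length-driven parser.
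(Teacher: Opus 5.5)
Your proposal is correct and follows the paper's proof essentially step for step. The paper also argues decodability: it splits on $\RS$ to recover $f(v_1),\dots,f(v_n)$, splits compound tokens on $\US$ under the no-forged-boundary hypothesis, and closes the residual case $f(v_i)=f(w_i)$, $v_i\neq w_i$ by noting that $f$ is the identity on the admissible (no raw $\RS$) domain of Remark~\ref{rem:domain}.

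You also rightly note that the stated sub-boundary hypothesis alone does not make $f$ injective (e.g.\ $x\RS y$ versus $x\US y$ in a simple slot). The paper closes exactly this gap by appealing to the admissible domain. So commit to the no-$\RS$ restriction. Your alternative, ``no value contains both'', would not suffice as literally phrased: those two example values each contain only one control byte, yet they still collide.
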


\begin{proof}
The arity $n$ and the type of each slot are fixed and public, so a decoder knows
where boundaries are expected; we show $\enc$ is decodable, which implies
injectivity. Given $s=\enc(\vec v)$, split $s$ on $\RS$. By construction $\RS$
occurs only between top-level tokens (inside every value $f$ has rewritten each
$\RS$ to $\US$), so the split yields exactly the $n$ tokens $f(v_1),\dots,f(v_n)$
in order. For a compound slot, split its token on $\US$; the schema fixes the slot
as compound and forbids a spurious $\US$, recovering the ordered pair. Thus
$\enc(\vec v)$ determines $(f(v_1),\dots,f(v_n))$ componentwise. If $\vec
v\neq\vec w$ differ in component $i$ with $f(v_i)\neq f(w_i)$, the encodings
differ. The remaining case $v_i\neq w_i$ but $f(v_i)=f(w_i)$ requires values
differing only in $\RS$-vs-$\US$ bytes; on the admissible domain $f$ is the
identity there, so $f(v_i)=f(w_i)\Rightarrow v_i=w_i$, a contradiction. Hence
$\enc(\vec v)\neq\enc(\vec w)$.
\end{proof}

\begin{remark}[Domain caveat, stated honestly]\label{rem:domain}
Injectivity holds on the admissible domain, i.e.\ where field values do not
themselves contain a raw $\RS$ that a caller expected to survive as data. In
practice the hashed fields are UUIDs, base64/hex digests, ISO-8601 timestamps, and
short human strings, none of which legitimately contain the control bytes $\RS$
(\texttt{0x1E}) or $\US$ (\texttt{0x1F}); the scheme tag $\ver$ additionally
domain-separates the scheme. The length-prefixed (TLV) serialization is a standard
technique that removes the caveat entirely: it is injective unconditionally. When
it is used for the content anchors, Lemma~\ref{lem:inj} holds without restriction;
the escaped form is retained for the composite record and for verifying documents
signed under it. Both coexist behind the scheme tag.
\end{remark}

\begin{definition}[Anchor binding]\label{def:bind}
An adversary wins if it outputs either (i)~two distinct A-content vectors $\vec
C\neq\vec C'$ with $h_1(\vec C\,)=h_1(\vec C'\,)$, or (ii)~two distinct bindings
$(h_1,\mathit{id}_B)\neq(h_1',\mathit{id}_B')$ with $h_2=h_2'$. Its advantage is
$\Adv^{\mathrm{bind}}$.
\end{definition}

\begin{theorem}[Anchor binding]\label{thm:bind}
If $H$ is collision-resistant and $\enc$ is injective on the admissible domain
(Lemma~\ref{lem:inj}), then $\Adv^{\mathrm{bind}}\le\Adv^{\mathrm{cr}}_{H}$.
\end{theorem}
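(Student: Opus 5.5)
The proof is a direct reduction. Given an adversary $\mathcal{A}$ against anchor binding (Definition~\ref{def:bind}), we build a collision finder $\mathcal{B}$ for $H$. $\mathcal{B}$ runs $\mathcal{A}$, takes its output, and forms the two hash preimages corresponding to the winning pair. It then outputs them as a collision. Success of $\mathcal{A}$ will imply success of $\mathcal{B}$, and $\mathcal{B}$ runs in essentially the same time. This gives $\Adv^{\mathrm{bind}}\le\Adv^{\mathrm{cr}}_H$.

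\textbf{Case (i).} $\mathcal{A}$ outputs $\vec C\neq\vec C'$ with $h_1(\vec C\,)=h_1(\vec C'\,)$. $\mathcal{B}$ outputs the pair
\[
x=\ver\cat\RS\cat\enc(\vec C\,),\qquad x'=\ver\cat\RS\cat\enc(\vec C'\,).
\]
By Lemma~\ref{lem:inj}, $\enc(\vec C\,)\neq\enc(\vec C'\,)$. Both strings share the fixed prefix $\ver\cat\RS$, so $x\neq x'$ while $H(x)=H(x')$.

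\textbf{Case (ii).} $\mathcal{A}$ outputs $(h_1,\mathit{id}_B)\neq(h_1',\mathit{id}_B')$ with $h_2=h_2'$. $\mathcal{B}$ outputs
\[
y=\ver\cat\RS\cat h_1\cat\enc(\mathit{id}_B),\qquad y'=\ver\cat\RS\cat h_1'\cat\enc(\mathit{id}_B').
\]
We must show $y\neq y'$. Strip the common prefix $\ver\cat\RS$. Since $h_1$ and $h_1'$ are 256-bit digests of fixed length, the first 32 bytes of the remainder determine $h_1$, and the suffix is exactly $\enc(\mathit{id}_B)$. So if $h_1\neq h_1'$, the strings differ in the first 32 bytes. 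If instead $h_1=h_1'$, then $\mathit{id}_B\neq\mathit{id}_B'$, and Lemma~\ref{lem:inj} gives distinct suffixes. Either way $y\neq y'$ and $H(y)=H(y')$.

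\textbf{Main obstacle.} The delicate step is the parsing argument in case (ii). It relies on $h_1$ being concatenated as a fixed-length raw byte string rather than, say, a variable-length or hex string containing separators. If the deployed encoding renders $h_1$ as hex, the argument still goes through because hex output is also fixed-length (64 characters). This convention needs to be stated explicitly.

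A second point to make explicit is that the arguments of $\enc$ lie in the admissible domain, or that the TLV form is in use (Remark~\ref{rem:domain}). This is what lets Lemma~\ref{lem:inj} apply in both cases. The same convention rules out cross-slot ambiguity between $\ver$ and the following content, since $\ver$ is a fixed literal.

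Finally, the two cases are disjoint winning conditions handled by the same reduction $\mathcal{B}$, which branches on the form of $\mathcal{A}$'s output. So the bound holds with no factor-of-two loss.
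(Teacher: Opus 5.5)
Your proposal is correct and follows the paper's proof essentially step for step. Both are the same two-case reduction to an $H$-collision: case (i) uses Lemma~\ref{lem:inj}, and case (ii) uses the fixed-length $h_1$ prefix together with injectivity of the identity encoding. The one cosmetic difference is that you write the $h_2$ preimage as $\ver\cat\RS\cat h_1\cat\enc(\mathit{id}_B)$, matching the definition in \S\ref{sec:anchor}, whereas the paper's proof writes it as $h_1\cat\RS\cat e$; the argument is unaffected.
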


\begin{proof}
Case (i): $h_1(\vec C\,)=h_1(\vec C'\,)$ means $H(\ver\cat\RS\cat\enc(\vec
C\,))=H(\ver\cat\RS\cat\enc(\vec C'\,))$. Since $\vec C\neq\vec C'$,
Lemma~\ref{lem:inj} gives $\enc(\vec C\,)\neq\enc(\vec C'\,)$, so the two hash
inputs are distinct: a collision of $H$. Case (ii): $h_2=h_2'$ means
$H(h_1\cat\RS\cat e)=H(h_1'\cat\RS\cat e')$ where $e,e'$ are the (escaped,
$\RS$-separated) encodings of $\mathit{id}_B,\mathit{id}_B'$. If
$(h_1,\mathit{id}_B)\neq(h_1',\mathit{id}_B')$ then the fixed-length $h_1$ prefix
and the injective encoding of the identity fields make the two inputs distinct: a
collision. In either case the reduction outputs the colliding pair.
\end{proof}

\subsection{Co-signature inseparability}\label{sec:insep}
We formalize that $\sigma_B$ cannot be detached from the specific first signature
it endorses and re-attached to a different one.

\begin{definition}[Co-signature inseparability]\label{def:insep}
The challenger runs $(pk_B,sk_B)\leftarrow\KGen(1^\lambda)$ inside an ideal secure
element, gives $pk_B$ to $\mathcal{A}$, and keeps a set $Q$. On a query
$(\mathit{id},C,\mathit{id}_B,pk_A,\sigma_A)$ with $\Vrfy(pk_A,m_A,\sigma_A)=1$,
the oracle $\mathcal{O}_B$ computes $h_2$, forms $m_B$, returns
$\sigma_B\leftarrow\Sign(sk_B,m_B)$, and records
$(\mathit{id},h_2,pk_A,\sigma_A)\in Q$. $\mathcal{A}$ may generate its own A-side
keys and signatures. It outputs
$R^{\ast}=(\mathit{id}^{\ast},C^{\ast},\mathit{id}_B^{\ast},pk_A^{\ast},
\sigma_A^{\ast},pk_B,\sigma_B^{\ast})$; let
$h_2^{\ast}=h_2(h_1(C^{\ast}),\mathit{id}_B^{\ast})$. $\mathcal{A}$ wins if
$\Vrfy^{\ast}(R^{\ast})=1$ and
$(\mathit{id}^{\ast},h_2^{\ast},pk_A^{\ast},\sigma_A^{\ast})\notin Q$.
\end{definition}

\begin{theorem}[Inseparability]\label{thm:insep}
If $\mathsf{Sig}$ is EUF-CMA-secure then for every PPT $\mathcal{A}$,
$\Adv^{\mathrm{insep}}_{\mathcal{A}}\le\Adv^{\mathrm{euf\text{-}cma}}_{\mathcal{B},\mathsf{Sig}}$.
\end{theorem}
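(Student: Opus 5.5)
We prove the theorem by a direct reduction. Given an inseparability adversary $\mathcal{A}$, we build an EUF-CMA forger $\mathcal{B}$ against $\mathsf{Sig}$ that receives the challenge key $pk_B$ and has access to a signing oracle $\Sign(sk_B,\cdot)$.

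\textbf{Simulation.} $\mathcal{B}$ forwards $pk_B$ to $\mathcal{A}$ and answers each query $(\mathit{id},C,\mathit{id}_B,pk_A,\sigma_A)$ as follows. It first checks $\Vrfy(pk_A,m_A,\sigma_A)=1$, using only public data. It then computes $h_1$, $h_2$ and $m_B$ exactly as in \S\ref{sec:chaining}, queries its own oracle on $m_B$, and returns the answer. It records the tuple $(\mathit{id},h_2,pk_A,\sigma_A)$ in $Q$, and also records the message $m_B$ in its own query list $M$. The simulation is perfect, since $\mathcal{O}_B$ in Definition~\ref{def:insep} is exactly a signing oracle applied to a publicly computable message. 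When $\mathcal{A}$ outputs $R^\ast$, $\mathcal{B}$ computes $m_B^\ast$ from it and outputs $(m_B^\ast,\sigma_B^\ast)$ as its forgery.

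\textbf{Freshness of the forgery.} Suppose $\mathcal{A}$ wins. Then $\Vrfy^\ast(R^\ast)=1$, so in particular $\Vrfy(pk_B,m_B^\ast,\sigma_B^\ast)=1$. It remains to show that $m_B^\ast\notin M$. This is the step that needs care. We must argue that the map
\[
(\mathit{id},h_2,pk_A,\sigma_A)\mapsto \text{``B:''}\cat\mathit{id}\cat\text{``:''}\cat h_2\cat\text{``:''}\cat pk_A\cat\text{``:''}\cat\sigma_A
\]
is injective on well-formed inputs. Then $m_B^\ast\in M$ would force $(\mathit{id}^\ast,h_2^\ast,pk_A^\ast,\sigma_A^\ast)\in Q$, contradicting the winning condition.

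We argue injectivity as follows, in the spirit of Lemma~\ref{lem:inj}. The identifier $\mathit{id}$ is a UUID, so it has fixed length and contains no ``:''. The anchor $h_2$ has fixed length 256 bits in a fixed serialization. The key $pk_A$ is a fixed-format serialization; its Base64 alphabet has no ``:'', or it is a fixed-length P-256 encoding. The last component $\sigma_A$ is whatever remains, and its Base64 form contains no ``:''. Parsing left to right therefore recovers all four components uniquely. If one prefers not to rely on these formats, we instead assume that the serialization of $m_B$ uses the length-prefixed form of \S\ref{sec:encoding}, which makes it injective unconditionally.

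Injectivity is the main obstacle, and it is the only point where the bound could fail. The inequality $\Adv^{\mathrm{insep}}_{\mathcal{A}}\le\Adv^{\mathrm{euf\text{-}cma}}_{\mathcal{B},\mathsf{Sig}}$ holds exactly when distinct tuples in $Q$ give distinct messages $m_B$. So the proof must state this encoding condition explicitly, as Remark~\ref{rem:domain} does for $\enc$.

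\textbf{Conclusion.} Collision resistance is not needed here. Freshness is measured on $h_2^\ast$ itself, not on the content $C^\ast$, so no hash collision enters this bound. Hash collisions matter only when $Q$-membership is translated into a statement about content, and Theorem~\ref{thm:bind} covers that separately. Since $\mathcal{B}$ succeeds whenever $\mathcal{A}$ wins and runs in essentially the same time, the theorem follows.
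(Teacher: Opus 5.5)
Your proof is correct and follows the same reduction as the paper. You embed the EUF-CMA challenge key as $pk_B$, answer $\mathcal{O}_B$ queries through the signing oracle, output $(m_B^{\ast},\sigma_B^{\ast})$, and get freshness from unique decodability of the $m_B$ template. One precision: the deployed $pk_A$ is $\mathrm{B64}(X){:}\mathrm{B64}(Y)$, which contains one internal colon (the paper treats it as two known-length base64 tokens), so your fixed-length or length-prefixed clause carries this step, not the observation that the Base64 alphabet omits ``:''.
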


\begin{proof}
$\mathcal{B}$ receives a challenge key $pk^{\ast}$ with signing oracle
$\mathcal{O}_{\mathrm{Sign}}$ and sets $pk_B:=pk^{\ast}$ (a perfect simulation,
since $sk_B$ is used only through the ideal secure element). It answers each
$\mathcal{O}_B$ query by forming $m_B$ and calling
$\sigma_B\leftarrow\mathcal{O}_{\mathrm{Sign}}(m_B)$, logging $m_B$ and recording
$Q$. On a winning $R^{\ast}$, $\mathcal{B}$ outputs $(m_B^{\ast},\sigma_B^{\ast})$
with $m_B^{\ast}=\text{``B:''}\cat\mathit{id}^{\ast}\cat\text{``:''}\cat
h_2^{\ast}\cat\text{``:''}\cat pk_A^{\ast}\cat\text{``:''}\cat\sigma_A^{\ast}$.
Validity: $\Vrfy(pk^{\ast},m_B^{\ast},\sigma_B^{\ast})=1$ by assumption. It
remains to show $m_B^{\ast}$ was never queried. The template
``B:''$\,\mathit{id}\,$``:''$\,h_2\,$``:''$\,pk_A\,$``:''$\,\sigma_A$ is uniquely
decodable: $\mathit{id}$ is a UUID containing no ``:'', $h_2$ is a fixed-length
hex digest, $pk_A$ occupies two known-length base64 tokens, and $\sigma_A$ is the
final base64 token; base64 and hex alphabets exclude ``:''. Hence
$m_B^{\ast}$ determines $(\mathit{id}^{\ast},h_2^{\ast},pk_A^{\ast},
\sigma_A^{\ast})$ uniquely; if it equalled a queried $m_B$ then the corresponding
recorded tuple would equal
$(\mathit{id}^{\ast},h_2^{\ast},pk_A^{\ast},\sigma_A^{\ast})\in Q$, contradicting
the winning condition. So $m_B^{\ast}$ is fresh and $(m_B^{\ast},\sigma_B^{\ast})$
is a valid forgery.
\end{proof}

\begin{remark}[Why the chaining fields are load-bearing]\label{rem:chain}
The extraction used only that $m_B$ commits injectively to
$(\mathit{id},h_2,pk_A,\sigma_A)$. Dropping $pk_A$ or $\sigma_A$ from $m_B$ would
let $\mathcal{A}$ present the same $\sigma_B$ under a different $\sigma_A^{\ast}$
without leaving $Q$, and the contradiction step fails. The chain $pk_A,\sigma_A$
inside $m_B$ is exactly what buys inseparability.
\end{remark}

\subsection{Attestation-bound key provenance}\label{sec:prov}
Recall $\Att=(\Gen,\AttVer)$ from \S\ref{sec:prelim} and the challenge binding
$\mathit{cdh}=H(c\cat H(pk_A))$ with $c$ fresh, single-use, and session-scoped
(\S\ref{sec:attest}).

\begin{definition}[Provenance soundness]\label{def:prov}
A PPT $\mathcal{A}_{\mathrm{key}}$ outputs $(pk^{\ast},c^{\ast},\tau^{\ast})$. Let
$\mathsf{Genuine}(pk^{\ast})$ be the event that $pk^{\ast}$ was generated inside a
genuine secure element under $\Att$'s soundness model, and $\mathcal{C}$ the set
of issued (fresh, single-use) challenges. $\mathcal{A}_{\mathrm{key}}$ wins if
$\AttVer(pk^{\ast},c^{\ast},\tau^{\ast})=1$ and either
(a)~$\neg\mathsf{Genuine}(pk^{\ast})$, or (b)~$c^{\ast}\notin\mathcal{C}$ or the
accepted $\mathit{cdh}^{\ast}\neq H(c^{\ast}\cat H(pk^{\ast}))$.
\end{definition}

\begin{proposition}[Provenance / no replay]\label{prop:prov}
If $\Att$ is sound and $H$ is collision-resistant, then
$\Adv^{\mathrm{prov}}_{\mathcal{A}}\le\Adv^{\mathrm{snd}}_{\Att}+\Adv^{\mathrm{cr}}_{H}$.
\end{proposition}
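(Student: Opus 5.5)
The plan is to partition the winning event of Definition~\ref{def:prov} into disjoint sub-events and reduce each one to a single assumption. Let $W$ be the event that $\mathcal{A}_{\mathrm{key}}$ wins. We split it as $W = W_a \cup W_b$. Here $W_a$ is the event that $\AttVer(pk^{\ast},c^{\ast},\tau^{\ast})=1$ and $\neg\mathsf{Genuine}(pk^{\ast})$. $W_b$ is the event that the token verifies and $pk^{\ast}$ is genuine, but either $c^{\ast}\notin\mathcal{C}$ or the accepted $\mathit{cdh}^{\ast}\neq H(c^{\ast}\cat H(pk^{\ast}))$. A union bound then gives $\Pr[W]\le\Pr[W_a]+\Pr[W_b]$.

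For $W_a$ the reduction is direct. An adversary $\mathcal{B}_{\mathrm{snd}}$ runs $\mathcal{A}_{\mathrm{key}}$ and forwards its output $(pk^{\ast},c^{\ast},\tau^{\ast})$ unchanged. Acceptance on a non-genuine key is, by the definition in \S\ref{sec:prelim}, exactly a break of $\Att$-soundness. So $\Pr[W_a]\le\Adv^{\mathrm{snd}}_{\Att}$.

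For $W_b$ we use the second clause of soundness: $\AttVer$ accepts only if the challenge was bound at generation time. In our deployment the bound challenge is the client-data hash $\mathit{cdh}=H(c\cat H(pk))$ that the secure element attested when it generated the genuine key. We split $W_b$ further into two cases.
\begin{enumerate}[leftmargin=1.6em]
\item \emph{The token's attested $\mathit{cdh}_0$ differs from $H(c^{\ast}\cat H(pk^{\ast}))$, yet $\AttVer$ accepted $c^{\ast}$.} Then the verifier accepted a challenge that was not bound at generation. This is again a soundness break, and we fold it into $\Adv^{\mathrm{snd}}_{\Att}$ using the same forwarding reduction. Because the two reductions are defined on disjoint events, the bound does not double-count.
\item \emph{$\mathit{cdh}_0=H(c^{\ast}\cat H(pk^{\ast}))$, but the token was originally generated for some other pair $(c_0,pk_0)$, i.e.\ a replayed token.} Then $H(c_0\cat H(pk_0))=H(c^{\ast}\cat H(pk^{\ast}))$. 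If $(c_0,H(pk_0))\neq(c^{\ast},H(pk^{\ast}))$, these two inputs collide under $H$. If the inner hashes are equal but $pk_0\neq pk^{\ast}$, the pair $(pk_0,pk^{\ast})$ is itself a collision. The collision-finder $\mathcal{B}_{\mathrm{cr}}$ outputs whichever pair applies.
\end{enumerate}
The only remaining possibility is $(c_0,pk_0)=(c^{\ast},pk^{\ast})$ with $c_0$ issued, so $c^{\ast}\in\mathcal{C}$ and the $\mathit{cdh}$ matches. That contradicts $W_b$. Hence $\Pr[W_b]\le\Adv^{\mathrm{cr}}_{H}$, and summing the two pieces gives the stated bound.

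The main obstacle is the outer hash step in case~2. Concatenating $c\cat H(pk)$ is injective only if the boundary between $c$ and $H(pk)$ is fixed. I would settle this by noting that $H(pk)$ has fixed length $256$ bits and $c$ is a fixed-length nonce, so equality of the two strings forces componentwise equality. A second concern is whether ``single-use'' is actually enforced. If $c^{\ast}$ reuses an already-consumed challenge, the hash argument does not catch it. I would treat this as a verifier-side bookkeeping condition: $\mathcal{C}$ records consumed challenges, and acceptance requires $c^{\ast}$ to be fresh. Under that convention, freshness is a syntactic check rather than a cryptographic assumption, which is why it adds no extra advantage term to the bound.
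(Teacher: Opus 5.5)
Your proof is correct and takes essentially the paper's route: case~(a) is forwarded unchanged to $\Att$-soundness, and case~(b) splits into two sub-cases, with freshness and single use imposed by convention. The first sub-case is an $H$-collision, where a token issued for another pair has a client-data hash equal to $H(c^{\ast}\cat H(pk^{\ast}))$; the second is an unexplained client-data hash, charged again to soundness---and you are slightly more careful than the paper in catching the nested collision $H(pk_0)=H(pk^{\ast})$ with $pk_0\neq pk^{\ast}$. One bookkeeping slip: the line ``Hence $\Pr[W_b]\le\Adv^{\mathrm{cr}}_{H}$'' should refer only to your case~2, because case~1 is charged together with $W_a$ to the single forwarding reduction, so the correct accounting is $\Pr[W_a]+\Pr[W_b\wedge\text{case 1}]\le\Adv^{\mathrm{snd}}_{\Att}$ and $\Pr[W_b\wedge\text{case 2}]\le\Adv^{\mathrm{cr}}_{H}$.
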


\begin{proof}
Suppose $\mathcal{A}_{\mathrm{key}}$ wins, so
$\AttVer(pk^{\ast},c^{\ast},\tau^{\ast})=1$. \emph{Case (a):}
$\neg\mathsf{Genuine}(pk^{\ast})$ is exactly a break of $\Att$-soundness; the
reduction outputs $(pk^{\ast},c^{\ast},\tau^{\ast})$, bounded by
$\Adv^{\mathrm{snd}}_{\Att}$. \emph{Case (b):} by $\Att$-soundness the bound
client-data hash is the value fixed at generation. If
$\mathit{cdh}^{\ast}=H(c\cat H(pk))$ for some \emph{issued} $(c,pk)\neq(c^{\ast},
pk^{\ast})$, then $H(c\cat H(pk))=H(c^{\ast}\cat H(pk^{\ast}))$ on distinct
preimages---an $H$-collision, bounded by $\Adv^{\mathrm{cr}}_{H}$. Otherwise no
issued pair explains $\mathit{cdh}^{\ast}$, so $\AttVer$ accepted a challenge
never bound to any issued attestation---again a soundness break. Freshness and
single use of $c$ rule out cross-session reuse. Summing gives the bound.
\end{proof}

\begin{remark}[Trust root]\label{rem:root}
Proposition~\ref{prop:prov} reduces provenance to attestation soundness and $H$;
it does \emph{not} eliminate a trusted party. It replaces CA trust with the
hardware-manufacturer attestation root, and a compromise of that root breaks
provenance---a deliberate trade (\S\ref{sec:limits}). Attestation binds the key to
genuine device hardware, not to a natural person.
\end{remark}

\subsection{Post-signing tamper evidence}\label{sec:tamper}
Write $\vec C$ for the vector of A-content fields, $C=\enc(\vec C\,)$,
$h_1=H(\ver\cat\RS\cat C)$, and $m_A=\text{``A:''}\cat\mathit{id}\cat
\text{``:''}\cat h_1$.

\begin{definition}[Content tamper resistance]\label{def:tamper}
The challenger generates $(pk_A,sk_A)$ in an ideal secure element. On a content
vector $\vec C$ (with id $\mathit{id}$) the oracle returns
$\sigma_A\leftarrow\Sign(sk_A,m_A)$ and records $\vec C\in Q_A$. $\mathcal{A}$
outputs $(\vec C^{\ast},\sigma_A^{\ast})$ and wins if
$\Vrfy(pk_A,m_A^{\ast},\sigma_A^{\ast})=1$ and $\vec C^{\ast}\notin Q_A$.
\end{definition}

\begin{theorem}[Tamper evidence]\label{thm:tamper}
If $\mathsf{Sig}$ is EUF-CMA-secure and $H$ is collision-resistant, then
$\Adv^{\mathrm{tamper}}_{\mathcal{A}}\le
\Adv^{\mathrm{euf\text{-}cma}}_{\mathcal{B}_1,\mathsf{Sig}}+
\Adv^{\mathrm{cr}}_{\mathcal{B}_2,H}$.
\end{theorem}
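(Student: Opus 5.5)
The plan is a standard two-case reduction. The case split is on whether the forged message $m_A^{\ast}$ collides with one of the messages the signing oracle was actually asked to sign. Let $\mathcal{A}$ win the game of Definition~\ref{def:tamper} with output $(\vec C^{\ast},\sigma_A^{\ast})$, where $\vec C^{\ast}\notin Q_A$. Let $h_1^{\ast}=H(\ver\cat\RS\cat\enc(\vec C^{\ast}))$ and $m_A^{\ast}=\text{``A:''}\cat\mathit{id}^{\ast}\cat\text{``:''}\cat h_1^{\ast}$. Define the event $E$: $m_A^{\ast}$ equals some oracle-signed message $m_A^{(j)}$, formed from a queried $(\mathit{id}^{(j)},\vec C^{(j)})$. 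Then $\Adv^{\mathrm{tamper}}_{\mathcal{A}}\le\Pr[\text{win}\wedge\neg E]+\Pr[\text{win}\wedge E]$, and each term is bounded separately.

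\emph{Case $\neg E$ (forgery).} Build $\mathcal{B}_1$ exactly as in the proof of Theorem~\ref{thm:insep}. It takes the EUF-CMA challenge key as $pk_A$; this is a perfect simulation because $sk_A$ lives only in the ideal secure element. It answers each content query by forming $m_A$ and forwarding it to its own signing oracle. At the end it outputs $(m_A^{\ast},\sigma_A^{\ast})$. Validity holds by the winning condition, and under $\neg E$ the message $m_A^{\ast}$ was never queried, so this term is at most $\Adv^{\mathrm{euf\text{-}cma}}_{\mathcal{B}_1,\mathsf{Sig}}$.

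\emph{Case $E$ (collision).} Here $m_A^{\ast}=m_A^{(j)}$. First I will argue that the template $\text{``A:''}\cat\mathit{id}\cat\text{``:''}\cat h_1$ is uniquely decodable. This uses the same facts as in Theorem~\ref{thm:insep}: $\mathit{id}$ is a UUID without ``:'' and $h_1$ is a fixed-length hex digest. Decodability gives $\mathit{id}^{\ast}=\mathit{id}^{(j)}$ and $h_1^{\ast}=h_1^{(j)}$. Since $\vec C^{\ast}\notin Q_A$, we have $\vec C^{\ast}\neq\vec C^{(j)}$. Lemma~\ref{lem:inj} then gives $\enc(\vec C^{\ast})\neq\enc(\vec C^{(j)})$, so the two inputs $\ver\cat\RS\cat\enc(\cdot)$ are distinct strings with equal $H$-values. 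This is exactly case~(i) of Theorem~\ref{thm:bind}. The reduction $\mathcal{B}_2$ generates the signing key itself, runs $\mathcal{A}$ with an honestly simulated oracle, records all queries, and outputs the colliding pair. This term is therefore at most $\Adv^{\mathrm{cr}}_{\mathcal{B}_2,H}$, and summing the two cases gives the stated bound.

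The main obstacle is making Case $E$ airtight. There are two issues. First, decodability of $m_A$ has to be checked so that ``same message'' really implies ``same $h_1$''. Second, the winning condition is phrased as $\vec C^{\ast}\notin Q_A$ with no explicit $\mathit{id}$. If $Q_A$ records only $\vec C$, a query with the same $\vec C$ under a different $\mathit{id}$ causes no trouble: equality of $m_A$ forces equality of $\mathit{id}$ and then of $h_1$. I would therefore note that $Q_A$ may equivalently record $(\mathit{id},\vec C)$. I would also restrict $\vec C^{\ast}$ to the admissible domain of Remark~\ref{rem:domain}, or assume the TLV encoding, so that Lemma~\ref{lem:inj} applies without caveat.
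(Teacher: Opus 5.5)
Your proposal is correct and follows the paper's proof essentially step for step: the same split on whether $m_A^{\ast}$ was ever returned by the signing oracle, then an EUF-CMA reduction $\mathcal{B}_1$ that embeds the challenge key as $pk_A$ in the fresh-message case, and in the repeated-message case unique decodability of the $m_A$ template plus Lemma~\ref{lem:inj}, giving an $H$-collision found by a $\mathcal{B}_2$ that holds $sk_A$ itself. Your extra remarks are sound refinements that the paper leaves implicit: that $Q_A$ may equivalently record $(\mathit{id},\vec C\,)$, and that $\vec C^{\ast}$ must lie in the admissible domain or the TLV encoding must be used.
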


\begin{proof}
Let $\mathcal{A}$ win with $(\vec C^{\ast},\sigma_A^{\ast})$, $\vec
C^{\ast}\notin Q_A$. Split on $m_A^{\ast}$. \emph{Event $E_1$: $m_A^{\ast}$ was
never returned by the oracle.} $\mathcal{B}_1$ sets $pk_A:=pk^{\ast}$, answers
signing queries via its oracle, and outputs $(m_A^{\ast},\sigma_A^{\ast})$, a
fresh valid EUF-CMA forgery; thus $\Pr[\mathcal{A}\text{ wins}\wedge
E_1]\le\Adv^{\mathrm{euf\text{-}cma}}_{\mathcal{B}_1}$. \emph{Event $E_2$:
$m_A^{\ast}$ equals some queried $m_A$ for $\vec C\in Q_A$.} The template
``A:''$\,\mathit{id}\,$``:''$\,h_1$ is uniquely decodable, so $m_A^{\ast}=m_A$
forces $h_1^{\ast}=h_1$, i.e.\ $H(\ver\cat\RS\cat C^{\ast})=H(\ver
\cat\RS\cat C)$. Since $\vec C^{\ast}\neq\vec C$ (one is in $Q_A$, the other is
not), Lemma~\ref{lem:inj} gives $C^{\ast}\neq C$, so the inputs differ: a
collision. $\mathcal{B}_2$ (holding $sk_A$, simulating exactly) outputs the
colliding pair; thus $\Pr[\mathcal{A}\text{ wins}\wedge
E_2]\le\Adv^{\mathrm{cr}}_{\mathcal{B}_2}$. $E_1,E_2$ are exhaustive and disjoint.
\end{proof}

\begin{remark}[Coverage]\label{rem:cover}
$\vec C$ comprises the document terms and A-identity fields of
\S\ref{sec:encoding}; every one is inside $h_1$ and hence under $\sigma_A$. The
39-field $\mathit{finalHash}$ record (Appendix~\ref{app:fields}) additionally
binds both signatures, both public keys, the biometric-backing flags, the
device-token hashes, and $h_1$, so the argument extends to the $B$-block once
$\sigma_B$ is present (via Theorem~\ref{thm:insep} for the chain and this theorem
for the content). Fields deliberately excluded---the non-deterministic rendering
bytes of any display file (\S\ref{sec:anchor-detach})---are, by design, not
covered.
\end{remark}

\section{Implementation}\label{sec:impl}
\begin{itemize}[leftmargin=1.4em]
\item \emph{Primitives.} ECDSA~P-256 (\texttt{prime256v1}), SHA-256; public keys
serialized as $\mathrm{B64}(X){:}\mathrm{B64}(Y)$; signatures DER + Base64. Secure
element: Apple Secure Enclave / Android StrongBox-Keystore. The signing path
admits \emph{no} software key: a device without a secure element yields an error,
not a silent software fallback, and every signing site enforces a hardware gate
($\mathit{backingType}\neq\text{HW}\Rightarrow$ abort). A strict DER decoder
rejects non-canonical encodings and out-of-range $r,s$; low-$S$ signatures are
normalized (Secure Enclave emits high-$S$), and the public point is checked to lie
on P-256.
\item \emph{Attestation.} Apple App Attest (iOS) and Google Play Integrity
(Android); production environment only ($\mathit{allow\_dev}=\text{false}$), cf.\
Remark~\ref{rem:prod}. App Attest tokens (iOS) admit offline chain-to-Apple-root
verification (mode~i) and Play Integrity (Android) is manufacturer-service-verified
(mode~ii). The reference deployment verifies both server-side at onboarding, via a
challenge $c$ with client-data hash $\mathit{cdh}=H(c\,\|\,H(pk))$, and records a
server-signed (ECDSA~P-256) attestation that any party can re-check with the
server's public key; an App Attest token additionally remains offline-verifiable by
a third party that pins Apple's root.
\item \emph{Transport.} Animated optical code with the LT-style fountain code of
\S\ref{sec:framing}; NFC, BLE, ultrasound, and file-embedded channels are
supported alternatives.
\item \emph{Verifier.} A standalone verifier recomputes $h_1$, $h_2$, and
$\mathit{finalHash}$ and checks $\sigma_A,\sigma_B$ with no network access. A
public instance that performs the recomputation locally in the browser is
available at \url{https://pakto.pro/en/verify}. The protocol core is implemented
twice (Dart on the client, Python in the independent verifier), and byte-for-byte
agreement of all digests across the two implementations was confirmed, so
verification does not depend on a single implementation.
\end{itemize}

\section{Limitations and Trade-offs}\label{sec:limits}
\begin{itemize}[leftmargin=1.4em]
\item \emph{No CA is a trade, not a free win.} Key trust rests entirely on the
attestation manufacturer root and its soundness; a break there breaks provenance.
We do not remove a trusted party---we relocate it from a CA to the hardware
manufacturer.
\item \emph{Provenance is per-device, not per-natural-person.} Attestation binds a
key to genuine hardware, not to an individual. Binding a key to a real-world
identity is an application-layer step outside the offline core and is not analyzed
here; it must not be read as a cryptographic guarantee of the core.
\item \emph{Attestation strength is platform-dependent.} StrongBox, TEE, and
software backings differ in assurance, and Android is fragmented; production vs.\
development attestation matters (Remark~\ref{rem:prod}).
\item \emph{Injectivity is domain-restricted.} Lemma~\ref{lem:inj} holds on the
admissible domain; a fully unconditional statement requires TLV encoding
(Remark~\ref{rem:domain}).
\item \emph{Fountain parameters are an engineering choice.} No optimality is
claimed for the custom degree rule (\S\ref{sec:framing}).
\item \emph{Out of scope.} Duress/coercion of a present user; malware on a rooted
device able to drive the biometric prompt; the \texttt{duressFlag} is not
integrity-protected.
\item \emph{No external audit yet.} The implementation has not undergone an
independent third-party security audit; we regard such an audit as a prerequisite
to any high-assurance deployment.
\end{itemize}

\section{Conclusion}\label{sec:conclusion}
We showed that mutual co-signing survives the removal of all standard
infrastructure---no network, no server on the trust path, and no CA---provided one
handles the circular signing dependency with a two-stage hash anchor, makes the
second signature inseparable from a specific first signature by chaining, and
grounds key trust in a hardware-attestation token transported over the channel
itself. The four properties reduce to collision resistance of $H$ and EUF-CMA
security of the signature scheme, and an independent verifier reproduces every
anchor and both signatures offline. The construction does not replace PKI; it
maps out how much verifiability remains when the infrastructure is taken away.

\appendix
\section{Worked Example (redacted)}\label{app:worked}
All values below use \emph{synthetic, non-personal} inputs and can be recomputed
with the reference verifier. The content anchor uses the length-prefixed
serialization (\S\ref{sec:encoding}): each field is emitted as
$\langle\text{byte-length}\rangle\text{:}\langle\text{value}\rangle\text{\textbackslash n}$,
after the tag \texttt{PAKTO-v82\textbackslash n} and a domain label.
\begin{quote}\small\ttfamily
preimage (head): PAKTO-v82\textbackslash n 11:contentHash\textbackslash n\\
\ \ 36:11111111-2222-3333-4444-555555555555\textbackslash n 7:General\textbackslash n \ldots\ 5:alice\textbackslash n 1:1\textbackslash n\\[3pt]
contentHash\ \ \ \ \ = 8271868f458a447f70bd54eda563b7347c0ea330fd1a77e24535fd719332384e\\
contentHashFull = 7ca362ee0b7663f64480b746d4835044f815c2be7e2abed681ed0de84cb4120b\\[3pt]
m\_A = A:11111111-2222-3333-4444-555555555555:8271868f\ldots\\
m\_B = B:11111111-2222-3333-4444-555555555555:7ca362ee\ldots:<pkA>:<sigA>
\end{quote}
Here $\sigma_A=\Sign(sk_A,m_A)$ and $\sigma_B=\Sign(sk_B,m_B)$ are produced in the
respective secure elements. Only device-token \emph{hashes} enter the composite
record (Appendix~\ref{app:fields}); the full attestation tokens travel in the
transfer structure (\S\ref{sec:framing}).

\section{Composite \texttt{finalHash} field order (escaped record)}\label{app:fields}
The composite hash is
$\mathit{finalHash}=H\!\big(\mathrm{utf8}(F_0\,\RS\,F_1\,\RS\cdots\RS\,F_{38}\,\RS)\big)$,
i.e.\ $\RS$ (\texttt{0x1E}) is written after every field including the last. Each
$F_i$ is passed through the escaping map $f$ of \S\ref{sec:encoding} unless noted.
Compound profile fields use $\US$ (\texttt{0x1F}) between nickname and avatar
index. The $B$-block (indices 28--35) is included iff $\sigma_B\neq\varepsilon$;
indices 36--38 always follow. This composite record uses the control-byte-escaped
serialization and carries its own scheme tag in field~0; the \emph{signed} content
anchors $h_1,h_2$ use the length-prefixed (TLV) serialization of
Appendix~\ref{app:worked}. The example column is illustrative and abbreviated: it
fixes the \emph{field order}, not a numeric artifact reproduced elsewhere in this
paper.

{\small
\begin{longtable}{@{}rlll@{}}
\toprule
\# & Field & Fallback / rule & Example (abbrev.)\\
\midrule
\endhead
0  & schema\_prefix & literal scheme tag & \texttt{<scheme-tag>}\\
1  & contractId & --- & \texttt{11111111-...}\\
2  & contractType & --- & \\
3  & contractCategory & --- & \\
4  & title & --- & \textit{(value)}\\
5  & details & --- & \textit{(value)}\\
6  & amount & (empty $\to$ 0 bytes) & (empty)\\
7  & currency & --- & \texttt{EUR}\\
8  & contractLanguage & --- & \texttt{en}\\
9  & jurisdiction & (empty $\to$ \texttt{nojuris}) & \\
10 & createdAt & ISO-8601 UTC & \texttt{2026-01-01T00:00:00Z}\\
11 & tsaTime & ISO-8601 UTC & \texttt{2026-01-01T00:00:00Z}\\
12 & timeSource & --- & \texttt{LOCAL:WARNING}\\
13 & publicKeyA & \texttt{B64(X):B64(Y)} & \textit{(base64:base64)}\\
14 & signatureA & DER+B64 & \textit{(DER+base64)}\\
15 & biometricBackedA & (empty $\to$ \texttt{SW}) & \texttt{HW}\\
16 & deviceTokenHashA & $H(\text{ctx\_device\_A:}\cat\tau_A)$; (empty $\to$ \texttt{nodeviceA}) & \textit{(hash)}\\
17 & geoHash & (empty $\to$ \texttt{nogeo}) & \textit{(redacted)}\\
18 & geoCoordinates & (empty $\to$ \texttt{nocoords}) & (test)\\
19 & tgPhoneHashA & (empty $\to$ \texttt{notg}) & \texttt{notg}\\
20 & emailA & (empty $\to$ \texttt{noemail}) & \texttt{noemail}\\
21 & identityRnokppHashA & (empty $\to$ \texttt{noidentity}) & \textit{(redacted)}\\
22 & identityQualifiedA & \texttt{qualified}/\texttt{notqualified} & \texttt{qualified}\\
23 & tgPhoneHashB & (empty $\to$ \texttt{notg}) & \texttt{notg}\\
24 & evidenceHashesJson & (empty $\to$ \texttt{nophoto}) & \texttt{["bc118af04b..."]}\\
25 & photoHashA & (empty $\to$ \texttt{nophoto}) & \texttt{nophoto}\\
26 & photoHashB & (empty $\to$ \texttt{nophoto}) & \texttt{nophoto}\\
27 & tsaHashA & (empty $\to$ \texttt{notsa}) & \texttt{notsa}\\
\midrule
\multicolumn{4}{@{}l}{\emph{B-block, indices 28--35, present iff $\sigma_B\neq\varepsilon$:}}\\
28 & publicKeyB & \texttt{B64(X):B64(Y)} & \textit{(base64:base64)}\\
29 & signatureB & DER+B64 & \textit{(DER+base64)}\\
30 & biometricBackedB & (empty $\to$ \texttt{SW}) & \texttt{HW}\\
31 & deviceTokenHashB & $H(\text{ctx\_device\_B:}\cat\tau_B)$; (empty $\to$ \texttt{nodeviceB}) & \textit{(hash)}\\
32 & tsaTokenHashB & (empty $\to$ \texttt{notsaB}) & \texttt{notsaB}\\
33 & emailB & (empty $\to$ \texttt{noemail}) & \texttt{noemail}\\
34 & identityRnokppHashB & (empty $\to$ \texttt{noidentity}) & \texttt{noidentity}\\
35 & identityQualifiedB & \texttt{qualified}/\texttt{notqualified} & \texttt{notqualified}\\
\midrule
\multicolumn{4}{@{}l}{\emph{always present:}}\\
36 & contentHash ($=h_1$) & (empty $\to$ \texttt{nocontent}) & \texttt{8271868f...}\\
37 & profileA & $\mathrm{nick}\,\US\,\mathrm{avatar}$; (empty $\to$ \texttt{noprofile}) & \texttt{alice <US> 1}\\
38 & profileB & $\mathrm{nick}\,\US\,\mathrm{avatar}$; (empty $\to$ \texttt{noprofile}) & \texttt{<US> 0}\\
\bottomrule
\end{longtable}}
\noindent For a complete document the reference verifier recomputes
$\mathit{finalHash}$ from these fields and compares it to the stored value.

\end{document}